\renewcommand\footnotetextcopyrightpermission[1]{} 
\begin{document}

\title{Hidden Sketch: A Space-Efficient Reversible Sketch for Tracking Frequent Items in Data Streams}

\author{Zicang Xu}
\affiliation{
\institution{Peking University}
\city{Beijing}
\country{China}
}
\email{xuzicang@stu.pku.edu.cn}
\author{Yuxuan Tian}
\affiliation{
\institution{Peking University}
\city{Beijing}
\country{China}
}
\email{tianyuxuan@stu.pku.edu.cn}
\author{Yuhan Wu}
\affiliation{
\institution{Peking University}
\city{Beijing}
\country{China}
}
\email{yuhan.wu@pku.edu.cn}
\author{Tong Yang}
\affiliation{
\institution{Peking University}
\city{Beijing}
\country{China}
}
\email{yangtong@pku.edu.cn}


\begin{abstract}
  Modern data stream applications demand memory-efficient solutions for accurately tracking frequent items, such as heavy hitters and heavy changers, under strict resource constraints. Traditional sketches face inherent accuracy-memory trade-offs: they either lose precision to reduce memory usage or inflate memory costs to enable high recording capacity. This paper introduces Hidden Sketch, a space-efficient reversible data structure for key and frequency encoding. Our design uniquely combines a Reversible Bloom Filter (RBF) and a Count-Min (CM) Sketch for invertible key and frequency storage, enabling precise reconstruction for both keys and their frequencies with minimal memory. Theoretical analysis establishes Hidden Sketch's space complexity and guaranteed reversibility, while extensive experiments demonstrate its substantial improvements in accuracy and space efficiency in frequent item tracking tasks. By eliminating the trade-off between reversibility and space efficiency, Hidden Sketch provides a scalable foundation for real-time stream analytics in resource-constrained environments.
\end{abstract}



\keywords{Data Stream,Reversible Sketch,Frequent Items Mining}


\maketitle
\section{Introduction}
\label{sec:Intro}
\subsection{Background and Motivation}
In the era of big data, processing massive data streams has become a crucial challenge across various domains.
Mining valuable information in data streams has broad applications such as network traffic monitoring\cite{network}, social network analysis\cite{social}, financial transaction monitoring\cite{financial}, and recommendation systems\cite{recommendation}.
A critical task in data stream analysis is tracking frequent items, such as heavy hitters and heavy changers, which dominate the features of a data stream.
Heavy hitters are the items that frequently occur in a data stream, while heavy changers are the items whose frequency changes heavily between two consecutive time intervals.

In real-world scenarios, the high velocity of incoming items and their unpredictable keys pose significant challenges for recording the desired information. 
To address these challenges, the research community has developed numerous sketch algorithms.
These algorithms are probabilistic in nature and are designed to achieve low memory overhead and high update rates. 
Most sketch algorithms utilize hash table data structures to complete update operations in constant time, often enabling parallel or pipelined execution. 
However, due to the high time cost associated with memory access, these data structures must typically be deployed in on-chip memory, such as L2 cache or shared memory, which are scarce resources on most processors.
Therefore, maintaining high accuracy with less memory overhead has become the development direction of sketch algorithms.

Sketch algorithms are primarily based on hash tables, where each item in the data stream is mapped to multiple buckets in the table using independent hash functions. 
Classic sketch algorithms, such as CM sketch\cite{cmsketch}, CU sketch\cite{cusketch}, and Count sketch\cite{countsketch}, offer both theoretical and practical guarantees for accurate frequency estimation of frequent items.
However, since these traditional sketch algorithms cannot store the item keys, they are limited in applications like anomaly detection.
Tracking keys presents a significant challenge because, while frequency estimates can be computed with reasonable accuracy, recording keys should ensure their completeness and correctness.

In response to this challenge, researchers have developed various algorithms to record item keys.
They can be broadly categorized into two classes based on their key recording strategies.
The first class includes explicit key recording sketch algorithms, which add key fields to each bucket in the hash table to store item key information\cite{ HeavyKeeper, elastic,MVSketch, fcmsketch, TightSketch}.
However, these algorithms are susceptible to hash collisions, necessitating complex strategies for retaining the keys of frequently occurring items.
For example, Elastic Sketch\cite{elastic} employs an Ostracism strategy to evict less frequent keys and retain high-frequency ones.
Nevertheless, these algorithms often require substantial memory to maintain a low miss rate, leading to inefficiencies.

The second class encompasses implicit key recording sketch algorithms, which aim to reduce memory usage by not directly recording keys\cite{Deltoid, revsketch,Sequential,Bitcount,flowradar,Sketchlearn}.
Prior methods have developed key mixing or index-based encoding approaches for implicit key recording.
In the key mixing approach, exemplified by FlowRadar, multiple keys are mixed within the key field.
This method requires auxiliary data structures for decoding, and when the decoding fails, nearly all the recorded information is lost.
In contrast, the index-based encoding approach, such as Reversible Sketch, embeds key information into bucket indices.
While this method avoids auxiliary structures, it demands a large number of buckets to ensure decoding accuracy, leading to potential memory inefficiency.
%

\subsection{Our Solution}
To address the limitations of existing methods, we propose \textbf{Hidden Sketch}, a novel invertible sketch algorithm designed to record keys and frequencies with minimal memory overhead efficiently.
Our algorithm draws inspiration from previous implicit key recording approaches while addressing their key deficiencies.
The core idea behind Hidden Sketch lies in leveraging a hybrid design that integrates a CM Sketch for frequency estimation with a Reversible Bloom Filter for key encoding.
By combining these components, Hidden Sketch fully exploits the information embedded in bucket indices and compactly encodes key information.
Unlike traditional methods, where key encoding may require significant memory, the Reversible Bloom Filter achieves efficiency by using 1-bit buckets, drastically reducing memory requirements.
At the same time, the CM Sketch is modeled as a system of linear equations, enabling accurate decoding of item frequencies when sufficient memory is available.
%

%

%
A key innovation of Hidden Sketch is its robust and systematic decoding process.
The Reversible Bloom Filter identifies candidate keys with high efficiency, avoiding the exhaustive search overhead seen in traditional Bloom Filters.
Meanwhile, we regard the CM Sketch as a system of linear equations, enabling precise decoding of item frequencies through established mathematical methods.
By solving these equations, the algorithm ensures accurate recovery of frequency information for significant items.
This hybrid strategy not only ensures precision but also overcomes the common limitations of prior methods, such as catastrophic information loss in FlowRadar or excessive memory demands in Reversible Sketch.
To track frequent items efficiently, we embed Hidden Sketch into a two-stage framework.
The first stage employs a lightweight filter to pre-process the data stream.
This filter efficiently excludes most low-frequency and unimportant items, ensuring that only significant items are passed to the second stage for precise processing.
This two-stage design ensures that only frequent items are processed in detail, significantly reducing the computational and memory overhead.

The contributions of this paper can be summarized as below:
\begin{itemize}
    \item We propose Hidden Sketch, a novel reversible sketch that can record both the key and frequency of items exactly with high space efficiency and reliability. 
    \item We conduct extensive experiments on different tasks and show that our algorithm outperforms SOTA algorithms.
    \item We prove the memory bound of Hidden Sketch and show the space efficiency theoretically.
    \item We open-source our code on Github for future study\cite{opensource}.
\end{itemize}

The remainder of this paper is organized as follows. 
In \S\ref{sec:Background}, we first formalize the definitions of the data stream and tasks we address in this paper.
Then we conduct a detailed analysis of prior works.
We introduce the core design, Hidden Sketch, in \S\ref{sec:Alg:Design}, a fully invertible data structure that records items implicitly.
Based on the Hidden Sketch, we introduce how to complete tasks such as heavy hitter detection in \S\ref{sec:Alg:Overview}.
%
%
We use experiments to evaluate our solution for heavy hitter detection and heavy change detection tasks in \S\ref{sec:Exp}.
Finally, we conclude this paper in \S\ref{sec:Conclusion}.

\section{Background and Related Work}
\label{sec:Background}
In this section, we first formalize the definitions of data stream and the tasks we address in this paper. 
Then, we review existing methods, emphasizing their strengths and limitations.
\subsection{Problem Statement}
\label{sec:Background:problem}
\begin{itemize}
    \item \textbf{Data Stream Model}: A data stream $S$ is formally defined as a sequence of items $\langle e_1, e_2,..., e_{|S|}\rangle(e_i\in \mathcal{K})$, where $|S|$ is the total size of the data stream, and $\mathcal{K}$ is the key space of items. Items with the same key can appear more than once, and the frequency of an item key $e$ is defined as $f(e):=\sum_{e_i=e}1$.
    \item \textbf{Frequency Estimation}: Given a data stream $S$, this task aims to estimate the frequency $f(e)$ of an querying item $e\in \mathcal{K}$. Frequency estimation is the most basic task since many tasks are relative to the frequency of items, including heavy hitter detection and heavy changer detection.
    \item \textbf{Heavy Hitter Detection}: In this task, we want to identify items in a data stream whose frequency exceeds a certain threshold. Formally, given a data stream $S$ and a pre-defined threshold $T$, the task is to find all items $e$ such that $f(e)\geq T$. Note that both the key and the frequency of heavy hitters are needed.
    \item \textbf{Heavy Changer Detection}: This task involves identifying items whose frequency changes significantly between two contiguous windows of a data stream. Formally, give two contiguous and non-overlapping data stream windows $S_1$ and $S_2$, the task is to find items $e\in \mathcal{K}$ such that $|f_{S_1}(e) - f_{S_2}(e)|>\Delta$, where $f_{S_i}(e)$ is the frequency of $e$ in window $S_i$, and $\Delta$ is a pre-defined threshold.
\end{itemize}
Accurate tracking of frequent items forms the core of both heavy hitter and heavy changer detection.
The central challenge lies in achieving high-precision frequency estimation while maintaining the keys of these items.
This dual demand amplifies complexity in resource-constrained environments.

\subsection{Related Work}
\label{sec:Background:RelatedWork}
Existing sketch algorithms for recording keys and frequencies can be broadly classified into two categories: explicit key recording and implicit key recording.
Each category has its unique advantages and limitations, which will be discussed in detail below.

\subsubsection{Explicit key recording:} Explicit key recording sketches\cite{HeavyKeeper, elastic, MVSketch, fcmsketch,  TightSketch,LD-Sketch} add a key field to each bucket in a hash table to directly store the key of the item hashed to the bucket.
However, due to hash collisions, multiple keys may map to the same bucket, and these algorithms can only select one of them to record in the key field.
To address this, various replacement strategies have been proposed to prioritize high-frequency keys.
For instance, HeavyKeeper\cite{HeavyKeeper} employs a count-with-exponential-decay strategy to keep heavy hitters and reduce the effect of low-frequency items.
Similarly, Elastic Sketch\cite{elastic} utilizes an Ostracism strategy to evict unimportant keys from the key fields.
Other algorithms, such as MV-Sketch\cite{MVSketch}, LD-Sketch\cite{LD-Sketch}, and TightSketch\cite{TightSketch}, also employ auxiliary bucket fields and delicate evicting strategies.
The limitation of these algorithms lies in their need to protect the completeness the recorded keys.
In a hash table, some buckets are never mapped by heavy hitters, while others are mapped by multiple heavy hitters due to hash collisions.
The replacement strategies can only solve collisions between high-frequency and low-frequency items, while they are powerless against collisions between high-frequency items.
To guarantee the recall rate of frequent items, these approaches have to allocate enough buckets, making some buckets empty or filled by low-frequency items, which is a waste of memory.
Thus, the space efficiency of these approaches is naturally limited.
\subsubsection{Implicit key recording:} Implicit key recording sketches aim to reduce memory usage by encoding keys into the data structure rather than storing them explicitly\cite{revsketch, Bitcount, flowradar,Sketchlearn, Sequential}.
They brilliantly encode keys into sketch data structures and decode them offline later.
Specifically, there are two main encoding methods at present.
The first encoding method, introduced by FlowRadar \cite{flowradar}, encodes keys by mixing them in a designated key field while maintaining a count of distinct keys in a separate field.
During decoding, the algorithm identifies buckets mapped by a unique key.
The key values in these buckets are extracted iteratively, leading to new buckets mapped by a unique key until either all keys are decoded or all buckets contain multiple distinct keys.
This process can be modeled as a 2-core pruning operation on a hypergraph, where buckets represent nodes and items form hyperedges connecting nodes corresponding to hashed buckets.
Then, the decoding process can be regarded as iteratively finding nodes of 1 degree and then eliminating the nodes and their hyperedge.
Successful decoding is guaranteed if all the hyperedges are eliminated, i.e. no 2-cores exist in the hypergraph.
Although this method only requires a linear number of buckets to achieve a high probability of a successful decoding process, it requires an additional Bloom Filter to identify distinct keys and a field for each bucket to store the number of distinct keys.
Moreover, when the size of the Bloom Filter or the number of buckets is insufficient, the decoding process may fail, and nearly all the key information is lost, as the mixed keys in the key field are not recoverable.
The second method leverages bucket indices in sketches, avoiding additional structures by designing specialized hash functions.
These algorithms exploit the skewed distribution of real-world data streams, where buckets mapped by heavy hitters possess significantly larger values.
By identifying items associated with these buckets, these algorithms effectively decode high-frequency keys.
To the extent of decoding results, these algorithms are equivalent to traversing the whole key space and finding the ones with extremely high frequency.
To avoid the unacceptable overhead of traversal, these algorithms use special hash functions, which can recover the origin key from several hash values.
For instance, Reversible Sketch uses modular hashing, which partitions the entire key into $q$ words, does $q$ hash functions on them respectively, and concatenates the $q$ hash values to get an entire hash function.
When given several bucket indices, Reversible Sketch can easily report possible keys that can be mapped to them by separately handling the partitioned keys.
Although it seems that these algorithms employ no extra memory, their encoding capacity is affected by the number of buckets.
They need enough index space to fully encode the keys, which may exceed the need to report accurate frequency estimation and cause more memory waste.
Moreover, when the threshold is lowered, buckets of heavy hitters cannot be significantly different from others, thus causing inaccurate detection.

\subsection{Summary}
Our analysis reveals that existing sketch designs face a fundamental dilemma between space efficiency and tracking reliability.
The explicit key recording suffers from inherent space inefficiency due to hash collision, and implicit approaches struggle with decoding fragility.
We propose \textbf{Hidden Sketch}, a novel implicit approach that independently encodes keys and frequencies, supporting a reliable decoding process and high space efficiency.

\section{Hidden Sketch Design}
\label{sec:Alg:Design}
We adopt the index-based encoding method to record keys.
Prior approaches using this method directly use the counter index to encode keys.
This causes redundant counters for frequency recording.
Therefore, we separate the key encoding part from the frequency recording part.
In the key encoding part, we use the index of 1-bit buckets to encode the keys.
On this occasion, the CM Sketch degenerates into a Bloom Filter.
We apply the index-based encoding method to the Bloom Filter and propose the Reversible Bloom Filter(RBF).
The reversibility of RBF is reflected in that it provides a feasible key recovery process while not reducing the false positive rate of Bloom Filters.

As for the frequency recording part, we utilize a CM Sketch to enable a systematic decoding process.
Specifically, after we decode the existing keys from the RBF, we can decode their exact frequencies by solving the linear equation system established by the CM Sketch.
Meanwhile, the false positives reported by the RBF can be filtered since their decoded frequencies are zero.

\subsection{Reversible Bloom Filter(RBF)}\label{sec:Alg:Design:RBF}
\subsubsection{Data Structure:}
\begin{figure*}
    \centering
    \includegraphics[width=0.9\textwidth, ]{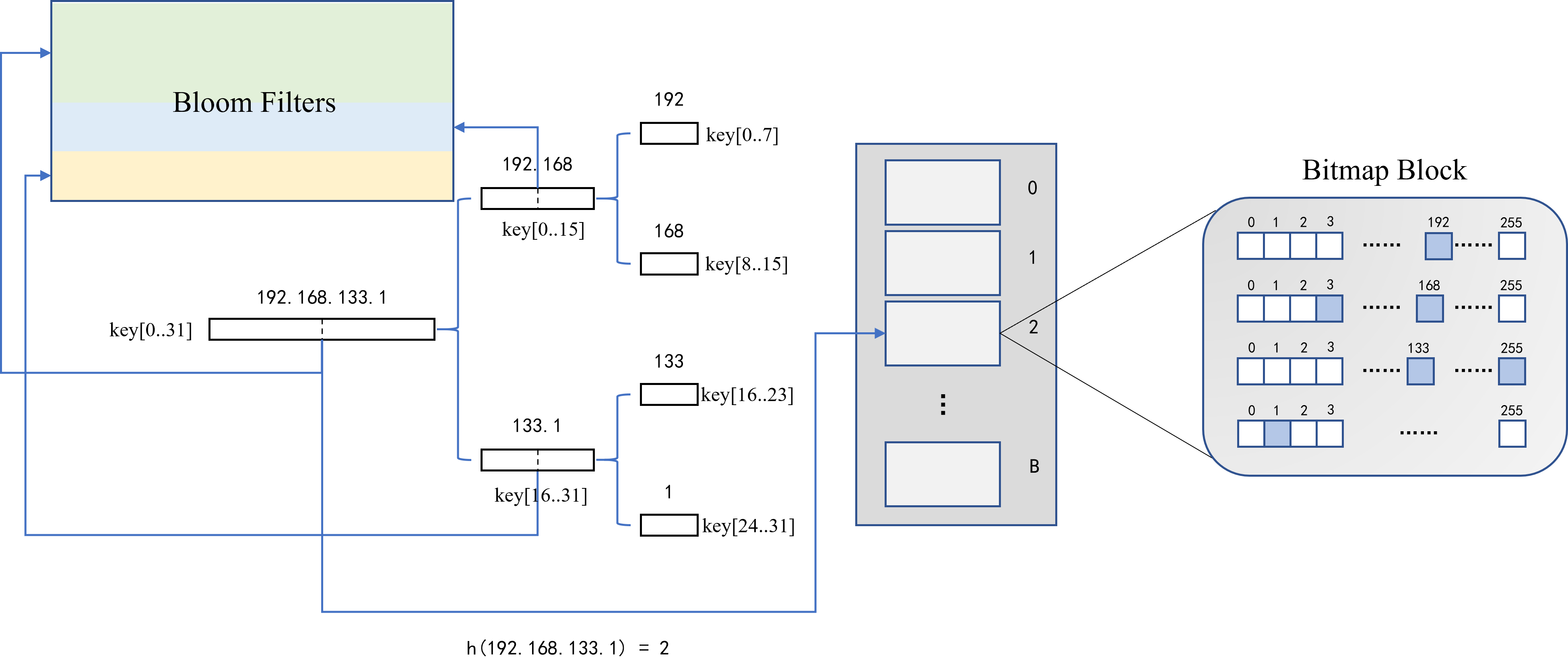}
    %
    \caption{An example of Reversible Bloom Filter with 32-bit key.}
    \label{Fig:RBF}
\end{figure*}
The Reversible Bloom Filter(RBF) employs a hierarchical structure to encode keys efficiently.
On the one hand, the RBF is essentially a Bloom Filter with special hash functions for index-based key encoding.
On the other hand, we divide the key conceptually based on a hierarchical tree structure and allocate a part of the memory for each tree node.

We illustrate a Reversible Bloom Filter recording 32-bit keys as an example in Fig. \ref{Fig:RBF}.
We divide the 32-bit key into four 8-bit base key segments corresponding to the tree's leaf nodes.
The internal nodes represent concatenations of segments from their child nodes, progressively combining segments as one moves up the tree.
We allocate a Bloom Filter for each internal node, represented by different colors in the block of Bloom Filters in Fig. \ref{Fig:RBF}.
For the leaf nodes, we use a bitmap block array instead of Bloom Filters due to the small segment spaces.
Each block contains multiple bitmaps that correspond to the leaf nodes respectively.
When a key is inserted, we obtain the segments of the internal nodes and add them into the corresponding Bloom Filters.
Simultaneously, we hash the complete key to a block in the bitmap block array and insert its leaf segments into their corresponding bitmaps in the hashed block.
We also illustrate an insertion example of RBF in Fig. \ref{Fig:RBF}.
When inserting the key 192.168.133.1, we add its internal node segments, including 192.168, 133.1, and 192.168.133.1, to the corresponding Bloom Filters.
We also use a hash function $h$ that maps 192.168.133.1 to the second block in the bitmap block array.
Within the block, there are four bitmaps corresponding to the four leaf nodes.
We set the 192nd bit, 168th bit, 133rd, and 1st bits of the four bitmaps to 1, respectively.

The Reversible Bloom Filter is also a Bloom Filter but with delicately designed hash functions.
Specifically, the Bloom Filters of internal nodes can be seen as employing hash functions that only depend on the corresponding key segment.
And the bitmap block array can be viewed as a Bloom Filter that uses hash functions $Hash(key) = h(key) * 2^{l} + seg(key)$, where $h(\cdot)$ is the hash function that maps $key$ to a certain block, $l$ is the length of leaf node segment, and $seg(\cdot)$ is the function that maps the key to its corresponding segment.
\begin{algorithm}
    \caption{Key Set Recovering Process}
    \label{pseudo:KeySetRecovering}
    \SetKwFunction{getkeys}{getCandKeys}
    let $root$ be the root node representing the entire key\;
    $S=\varnothing$\;
    \For {$i$ \rm{in [1..$B$]}} {
        $S= S\cup \sigma_{h(key)=i}(\getkeys{root, i})$\;
    }
    \KwRet{$S$}\;
    \SetKwProg{Fn}{Def}{:}{}
    \Fn{\getkeys{$node, index$}}
    {
        \If {node \rm{is leaf}}{
            $S=\varnothing$\;
            let $M$ be the bitmap corresponding to $node$ in $bitmapBlock[index]$\;
            \For {i \rm{in [0..$2^{node.l}$]} } {
                \If{\rm{$M$[$i$]==1}}{
                    $S=S\cup \{i\}$\;
                }
            }
            \KwRet{S}\;
        }
        \Else{
            $S=\{\epsilon\}$\;
            \For {\rm{each $child$ of $node$}} {
                $S=S\times\getkeys{child, index}$\;
            }
            $S=\sigma_{BF_{node}}(S)$\;
            \KwRet{S}\;
        }
    }
\end{algorithm}

\subsubsection{Key Set Recovery:}
The key recovery process in the Reversible Bloom Filter is efficient and systematic, enabling the reconstruction of encoded keys through a bottom-up approach in the tree.
For each bitmap block in the array, we obtain a candidate set of each leaf segment by examining the bitmaps within the block.
Next, we recursively concatenate the candidate sets of segments to create longer segment candidate sets, ultimately resulting in the candidate set of the complete key.
Specifically, for each internal node $\mathcal{N}$, once we yield all the candidate sets of its child nodes, we can compute the Cartesian product of these sets to form a candidate key segment set of $\mathcal{N}$.
Then we filter the candidate set using the Bloom Filter of $\mathcal{N}$.
Formally, if an internal node $\mathcal{N}$ has $k$ child nodes with candidate sets $S_1,S_2,...,S_k$, the candidate set of $\mathcal{N}$ is computed as follows:
\begin{equation*}
    S=\sigma_{BF_\mathcal{N}}\{seg_1\oplus ...\oplus seg_k|seg_i\in S_i, i=1,2,...,k\}
\end{equation*}
In this equation, $\oplus$ represents the concatenation operator for binary strings, and $\sigma_{BF_\mathcal{N}}$ denotes the filtering operation based on the Bloom Filter associated with $\mathcal{N}$.
After computing the candidate sets for all internal nodes, the root node generates the candidate set for the complete key.
Each candidate key is then verified against the bitmap block array to ensure it hashes to the correct block.
Algorithm~\ref{pseudo:KeySetRecovering} illustrates the pseudocode of the recovering process.
We utilize a recursive function \texttt{getCandKeys} to compute the candidate key set corresponding to $node$ induced by the bitmap block $index$.
If $node$ is a leaf key segment node, it uses the corresponding bitmap in block $index$ to recover the candidate key set (lines 7-13).
If $node$ has child nodes, it computes the Cartesian product of the candidate key set from its child nodes and then filters this set using the corresponding Bloom Filter (lines 16-18).
The recovery process computes the union of the candidate key sets induced by the $B$ blocks and filters the candidate keys using the block hash function $h$ (lines 2-4).

It is important to note that the output set of the recovery process consists entirely of the keys whose mapping positions are set to 1 in the Reversible Bloom Filter. In other words, the recovery process is functionally equivalent to a traversal method that checks and verifies the entire key space within the Reversible Bloom Filter in terms of results. However, its time complexity is more efficient compared to the traversal method. 
This leads us to prove that the false positive rate of the Reversible Bloom Filter is comparable to that of a standard Bloom Filter. We will discuss this in detail in Appendix \ref{sec:Math}.

\subsection{Frequency Decoding with CM Sketch}\label{sec:Alg:Design:DecodingProcess}
We select the CM Sketch for frequency recording since it can be regarded as a linear mapping\cite{SynopsesForMassiveData, NZE}.
Formally, suppose the number of possible items is $n$, and the number of buckets employed by the CM Sketch is $m$.
Let $\vec{x}$ represents a column vector with $n$ dimensions, where the $i$th element corresponds to the frequency of item $i$.
Let $\vec{y}$ denotes a column vector with $m$ dimensions, where the element on the $j$th dimension presents the value of bucket $j$.
Then the CM Sketch can be mathematically expressed by the following equation:
\begin{equation*}
    \mathbf{\Phi}\cdot\vec{x}=\vec{y},
\end{equation*}
where $\Phi$ is a $m\times n$ matrix.
The element $\mathbf{\Phi}_{i, j}$ indicates that the $i$th bucket is incremented by $\mathbf{\Phi}_{i,j}$ if inserting an item $i$.
As we have yielded the candidate key set from the key recovery process of RBF, we can construct the matrix $\mathbf{\Phi}$ that maps the frequency vector of candidate keys to the value vector of buckets.
$\vec{y}$ can be directly obtained from the recorded values in the CM Sketch.
Therefore, we can decode the frequencies of candidate keys by solving the equation $\mathbf{\Phi}\cdot\vec{x}=\vec{y}$.
However, the unique solution of the equation exists only when the rank of $\mathbf{\Phi}$ equals the dimension of $\vec{x}$.
The frequencies can not be determined when the null space of $\mathbf{\Phi}$ is not empty.
Since the construction of matrix $\mathbf{\Phi}$ depends on the hash functions of the CM Sketch, it can be regarded as a random matrix with the same sum on each column.
Intuitively, when the matrix has sufficient rows, it is likely to be full rank.
To illustrate the upper bound of the sufficient number, we introduce the pure bucket extraction process as the first step of our frequency decoding process.
\begin{algorithm}
    \caption{Pure bucket extraction}
    \label{pseudo:PureBucketExtraction}
    \KwIn{The candidate key set $S$, the vector of bucket value $Bucket[m]$.}
    \KwOut{A map of key-frequency pairs.}
    Construct a set array $keySet[m]$, where $keySet[i]$ contains the keys mapping to $Bucket[i]$.\;
    $pureQue\gets $ empty queue\;
    \For{i \rm{in}$[1..m]$}
    {
        \If{$keySet[i]$.\rm{size}$()=1$}{
            $pureQue$.enqueue($i$)\;
        }
    }
    $Result\gets$ empty map\;
    \tcc{iterative resolution}
    \While{\rm{not} $pureQue$.\rm{isEmpty()}}
    {
        $i\gets pureQue$.dequeque()\;
        \If{keySet\rm{$[i]$.isEmpty()}}
        {
            \textbf{continue}\tcp{the key has been removed}
        }
        let $key$ be the unique element in $keySet[i]$\;
        $freq\gets Bucket[i]$\;
        $Result$.insert$(key, freq)$\;
        \tcc{extract the item from the buckets it maps}
        \For{\rm{bucket index $k$ hashed by $key$}}
        {
            $Bucket[k]\gets Bucket[k]-freq$\;
            $keySet[k]$.remove$[key]$\;
            \If{keySet[k].\rm{size}$()=1$}
            {
                $pureQue$.enqueue($k$)\;
            }
        }
    }
   
    \KwRet{Result}\;
\end{algorithm}

The pure bucket extraction step is inspired by the decoding process of FlowRadar\cite{flowradar}.
We identify a bucket of the CM Sketch as a pure bucket if only one candidate key is mapped to the bucket.
In other words, the corresponding row in the matrix $\mathbf{\Phi}$ has only one non-zero element.
We can directly determine that key's frequency as the bucket's value.
After determining a key's frequency, we can extract it from other buckets it maps to and update their values.
The extraction operation may cause the updated bucket to become a new pure bucket.
The pure bucket extraction step is illustrated in Algorithm \ref{pseudo:PureBucketExtraction}.
We maintain a queue for the pure bucket, which is initialized by scanning all the buckets and enqueue the pure ones (lines 2-5).
Then we iteratively extract items and find new pure buckets (lines 7-18).
The loop is terminated when the queue is empty.
That is to say, there are no buckets containing only one key.
If all the buckets are empty, the frequency decoding process is completed, i.e., all the item frequencies are decoded.
As it has been proved in \cite{matrixrank,iblt}, when using $k$ hash functions, the probability of failing to decode all the items is bounded by $O(n^{-k+2})$, if $m>c_kn$, where $c_k$ is a constant associates with $k$.
%

%

The second step of the frequency decoding process addresses scenarios where the pure bucket extraction step fails to decode all the items.
This failure occurs when every remaining bucket contains at least two keys, making it impossible to directly determine the frequency of any key through pure bucket analysis.
In such cases, we resort to solving the equation $\mathbf{\Phi}\cdot\vec{x}=\vec{y}$ based on Singular Value Decomposition(SVD).
SVD is a robust method for solving linear equations, particularly when the coefficient matrix $\mathbf{\Phi}$ is not of full rank.
It decomposes $\mathbf{\Phi}$ into three matrices $U\Sigma V^T$, where $U, V$ are orthogonal matrices, and $\Sigma$ is a diagonal matrix containing the singular values of $\mathbf{\Phi}$.
After that, we can compute a pseudo-inverse of $\mathbf{\Phi}$:
\begin{equation*}
    \Phi^+=V\cdot \Sigma^+\cdot U^T,
\end{equation*}
where $\Sigma^+$ is also a diagonal matrix which transforms the non-zero elements of $\Sigma$ to their reciprocals.
Then we can obtain a possible solution $\vec{x}=\Phi^+\cdot\vec{y}$.
When the rank of $\Phi$ equals the number of its columns, such a solution is the unique solution $\vec{x}$.
Otherwise, it is a possible solution with the least L2-norm.

Although we can directly derive the solution without the pure bucket extraction step, the SVD-based decoding method has some key limitations.
The computation of SVD has a time complexity of $O(n^3)$, which can be very slow for large matrices.
Meanwhile, the computation of SVD and pseudo-inverse matrices suffer from numerical precision issues, leading to inaccurate results.
By integrating the pure bucket extraction with the SVD-based decoding, we achieve a comprehensive frequency decoding framework.
The pure bucket extraction step efficiently resolves buckets with unique keys, reducing the complexity of the problem.
While the SVD-base decoding step handles the remaining unresolved buckets leveraging mathematical guarantees for approximate or exact solutions.
This two-step process ensures high accuracy and scalability, making it well-suited for large-scale data streams.

\subsection{Optimization}
To improve the efficiency of the frequency decoding process, we introduce an optional optimization for the CM Sketch, leveraging the fact that only non-negative integer solutions are required.
In the traditional CM Sketch insertion method, an item's frequency is incremented by 1 in the hashed buckets.
To enhance the decoding process, we propose replacing this increment with a prime number derived from the item's hash.
Specifically, we maintain an array of large prime numbers, denoted as PRIME[].
For each incoming item $e$, a prime number $p_e$ is associated based on the hash function $g(\cdot)$, specifically $p_e = $PRIME$[g(e)]$. The corresponding buckets are then incremented by $p_e$.

The modification on the CM sketch changes the non-zero elements of the equivalent matrix $\Phi$ into random primes instead of 1.
This makes the integer solution sparser in the solution space.
Formally, the matrix of the modified CM sketch can be written as:
\begin{equation}
\label{eq:primeeq}
    \Phi\cdot\Sigma_p\cdot\vec{x}=\vec{y},
\end{equation}
where $\Phi$ is also the matrix constructed by the mapping relationship between items and buckets, and $\Sigma_p$ is a diagonal matrix that multiplies the elements of $\vec{x}$ with primes.
Suppose $\vec{x}_{a}$ be the actual vector of item frequencies, then the solutions of Equation (\ref{eq:primeeq}) can be represented by $\hat{\vec{x}}=\vec{x}_a+\vec{x}_\epsilon$, where $\vec{x}_\epsilon$ is in the null space of $\Phi\cdot \Sigma_p$.
Note that $\Sigma_p\cdot\vec{x}_\epsilon$ can be also seen as a vector in the null space of $\Phi$.
However, its elements are divisible by the corresponding prime number, which makes it sparse in the solution space.

To solve the equation of modified CM Sketch, we also do the pure bucket extraction step to minimize the number of undetermined items.
However, when we find there are multiple solutions in the SVD step (the rank of matrix $\Phi$), we use Integer Linear Programming (ILP) to solve the remaining items.
The constraint set of ILP comprises the unsolved equations and the non-negative constraint on each variable, while the optimization object is not considered.
Using this method, the decoding process can yield actual frequencies even when the matrix is not full rank.
Thus, the encoding capacity of Hidden Sketch is enhanced.

Since we often use 4-byte counters for frequent item recording, surpassing the maximum counter significantly, some bits of the counters are never used in traditional CM Sketch.
This optimization cleverly exploits the unused bits and enhances the encoding capacity of a CM Sketch.

\section{Implementation}
\label{sec:Alg:Overview}
In this section, we illustrate how to use Hidden Sketch to track frequent items and solve the heavy hitter detection and heavy change detection tasks.
Note that heavy changers are also heavy hitters in at least one window.
We can detect heavy hitters in both the two consecutive windows and find items whose reported frequencies change heavily between the two windows.

As the number of heavy hitters is relatively small compared with the total items, we do not want to precisely record insignificant items.
Since most items in a data stream are infrequent, we employ a two-stage framework that separates frequent items from infrequent items.
For frequent items, we record their keys and frequencies accurately, while for infrequent items, we only provide frequency estimation to save memory.
As depicted in Fig. \ref{Fig:Framework}, the first stage employs a lightweight cold filter to pre-process incoming items.
The filter estimates item frequencies in real time and excludes most low-frequency items.
Only items exceeding a predefined filtering threshold are passed to the second stage for precise recording.
For instance, if the threshold of heavy hitter is predefined as 200, we can use a CU sketch\cite{cusketch} with 8-bit buckets width as the code filter.
Each incoming item is hashed into multiple buckets, and the bucket with the lowest value is incremented.
If the value after the increment exceeds 200, the item is also inserted into the Hidden Sketch in the second stage; otherwise, no further action is taken.
This design ensures that the second stage focuses solely on high-value data, reducing both computational and memory overhead.
%

%

When handling heavy changer detection, we obtain the heavy hitters in two consecutive windows.
For each heavy hitter, we query its frequency in the other window.
Specifically, if it is also a heavy hitter in the other window, the reported frequency is the decoded result after adding the filtering threshold.
Otherwise, we query it in the cold filter using its query operation.
Then, we check whether the absolute value of the difference between the item frequencies in the two windows exceeds the heavy change threshold.
If so, we report the item and its change value.

\begin{figure}[t]
\centering
\includegraphics[width=0.5\textwidth]{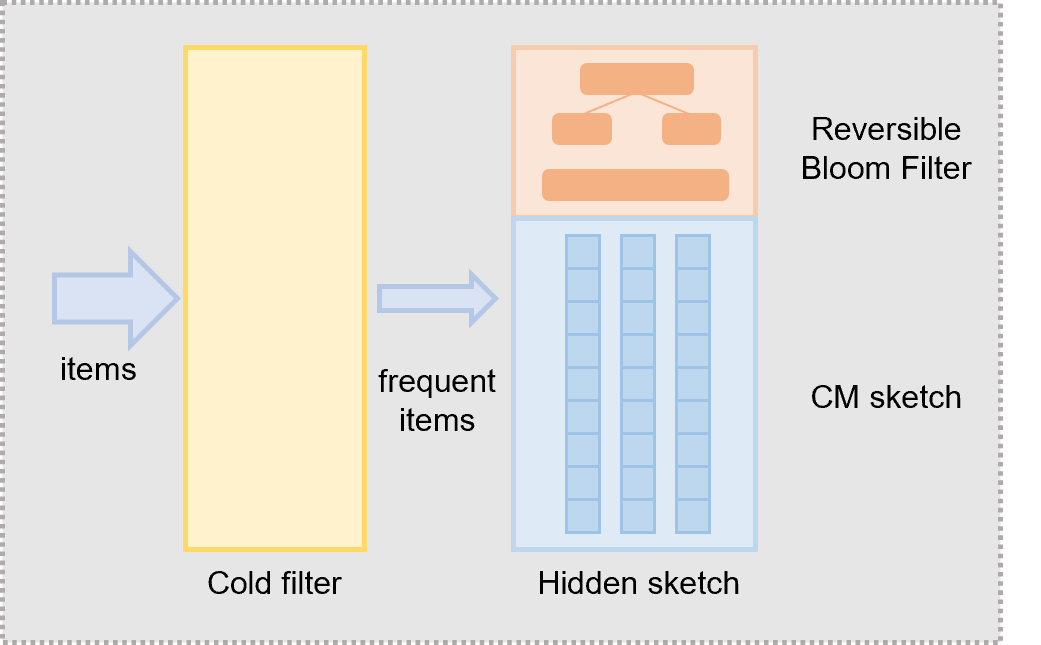} 
\caption{Two-stage framework for tracking frequent items.}
\label{Fig:Framework}
\end{figure}

\section{Experiment}
\label{sec:Exp}

In this section, we use experiment results to evaluate the performance of our algorithm on the tasks given in \ref{sec:Background:problem}.
The experiment results show that our algorithm outperforms previous approaches on frequent item tracking tasks.
It can achieve relatively high accuracy when the memory is extremely limited.

\subsection{Experimental Setup}
\subsubsection{Datasets}
We use three real-world datasets to generate our workloads. 

\begin{itemize}
    \item \textbf{CAIDA:} We use public traffic traces of CAIDA\cite{caida}, which records 5-tuple of each packet in a data center. We divide each trace into 5s-long time intervals, which contain about 2.2M items and 60K distinct keys in each interval.
    \item \textbf{MAWI:} The MAWI dataset is sourced from a traffic data repository maintained by the MAWI Working Group of the WIDE Project\cite{mawi}. Each trace is divided into 45s-long time intervals containing approximately 2.5M items and 50K distinct keys.
    \item \textbf{IMC:} The IMC dataset comes from an empirical study of the network-level traffic characteristics of current data centers\cite{benson2010network}. We set time intervals so that there are around 2M items and 9K distinct keys within each interval.
\end{itemize}

\noindent
\subsubsection{Tasks}
\begin{itemize}
    \item \textbf{Frequency estimation: } The frequency estimation task queries all the items in a time window and reports their frequency estimation. It reflects the most basic feature of data streams.
    \item \textbf{Heavy hitter detection: } We use algorithms to detect heavy hitters whose frequency exceeds 0.01\% of the total frequency. Algorithms should provide the set of detected heavy hitters and report their accurate frequencies simultaneously.
    \item \textbf{Heavy changer detection: } Heavy changers are items whose frequency changes heavily in two consecutive time windows. We use algorithms to detect heavy changers whose frequency change exceeds 0.05\% of the total change.
\end{itemize}

\subsubsection{Evaluation metrics.}
\begin{itemize}
\item \textbf{F1 score: } $\frac{2\times PR \times RR}{PR + RR}$, where $PR$ denotes the precision rate, and $RR$ denotes the recall rate. 
We use F1 score to evaluate the accuracy of heavy hitter and heavy changer detection.
\item \textbf{ARE(Average Relative Error): } $\frac{1}{n}\Sigma_{i=1}^{n}\frac{|f_i-\hat{f_i}|}{f_i}$, where $n$ is the number of distinct items, $f_i$ and $\hat{f_i}$ are the true and estimated frequency of item $i$ respectively. 
We use ARE to evaluate the precision of frequency estimation of heavy hitters and all the items.
%
%
\end{itemize}
\begin{figure*}[htbp]
    \centering
    \subfigure[ARE of frequency estimation]{
        \includegraphics[width=0.23\textwidth,]{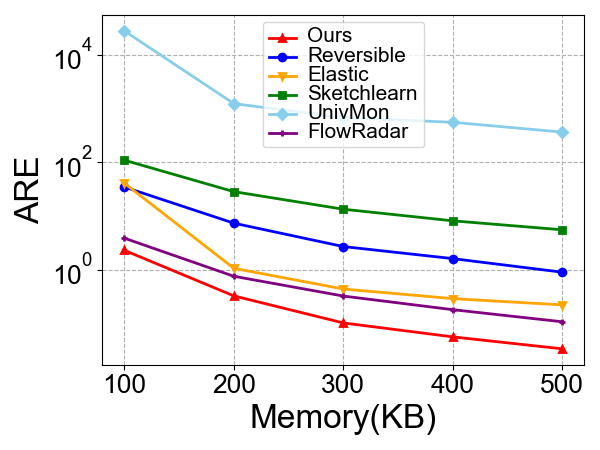}
        \label{Fig:caida:estARE}
    }
    \hspace{-0.2cm}
    \subfigure[F1 score of heavy hitter detection]{
        \includegraphics[width=0.23\textwidth,]{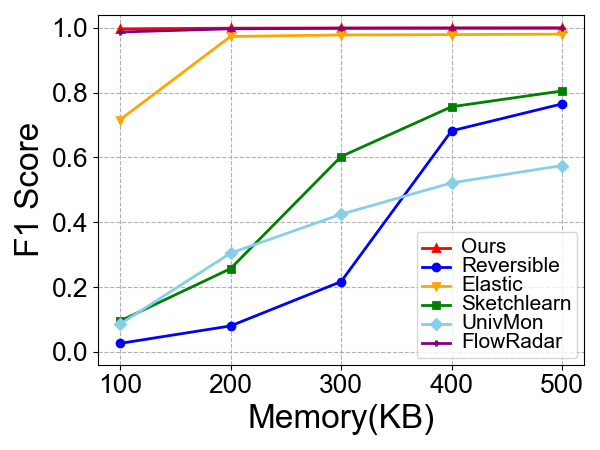}
        \label{Fig:caida:HHF1}
    }
    \hspace{-0.2cm}
    \subfigure[ARE of heavy hitter detection]{
        \includegraphics[width=0.23\textwidth,]{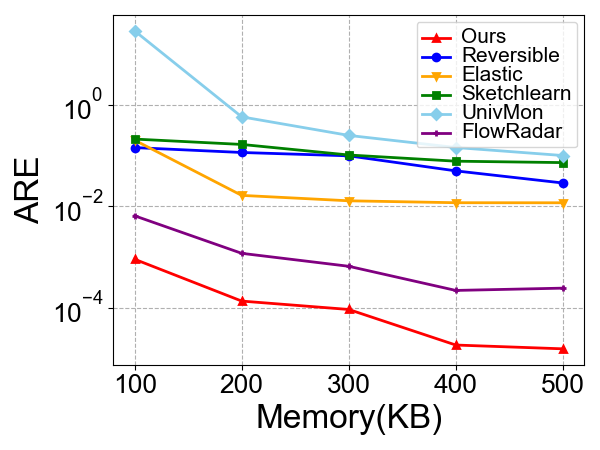}
        \label{Fig:caida:HHARE}
    }
    \hspace{-0.2cm}
    \subfigure[F1 score of heavy changer detection]{
        \includegraphics[width=0.23\textwidth,]{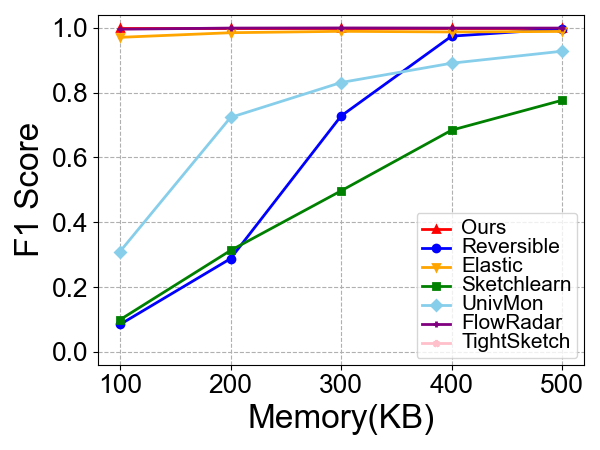}
        \label{Fig:caida:HCF1}
    }
    \caption{Accuracy comparison of our algorithm with baselines on CAIDA.}
    \label{Fig:caida:Accuracy}
\end{figure*}
\begin{figure*}[htbp]
    \centering
    \subfigure[ARE of frequency estimation]{
        \includegraphics[width=0.23\textwidth,]{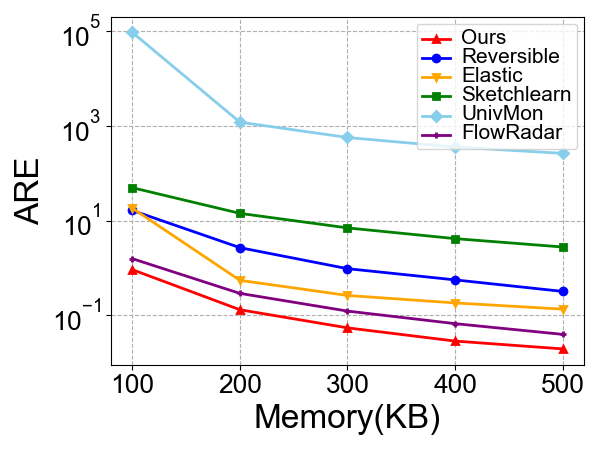}
        \label{Fig:mawi:estARE}
    }
    \hspace{-0.2cm}
    \subfigure[F1 score of heavy hitter detection]{
        \includegraphics[width=0.23\textwidth,]{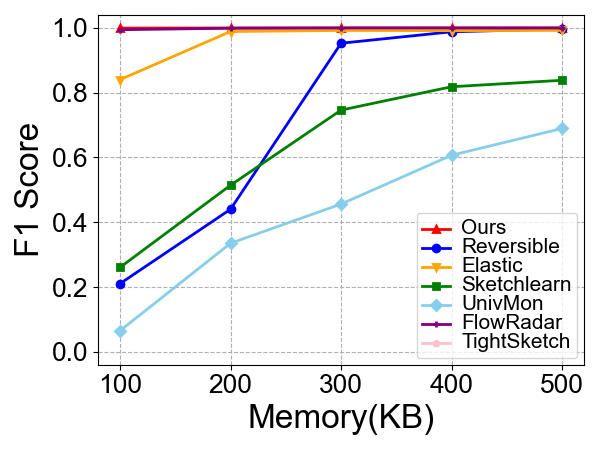}
        \label{Fig:mawi:HHF1}
    }
    \hspace{-0.2cm}
    \subfigure[ARE of heavy hitter detection]{
        \includegraphics[width=0.23\textwidth,]{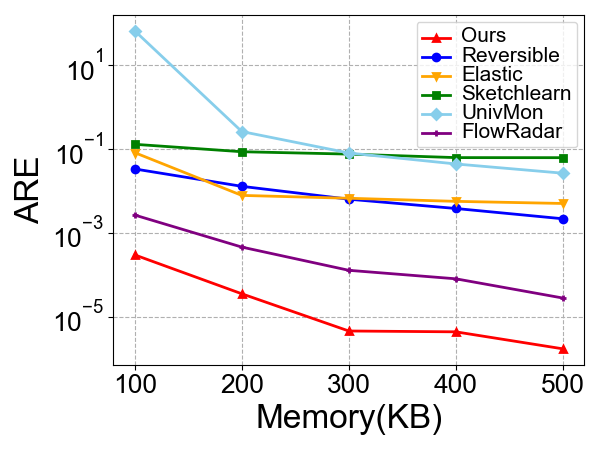}
        \label{Fig:mawi:HHARE}
    }
    \hspace{-0.2cm}
    \subfigure[F1 score of heavy changer detection]{
        \includegraphics[width=0.23\textwidth,]{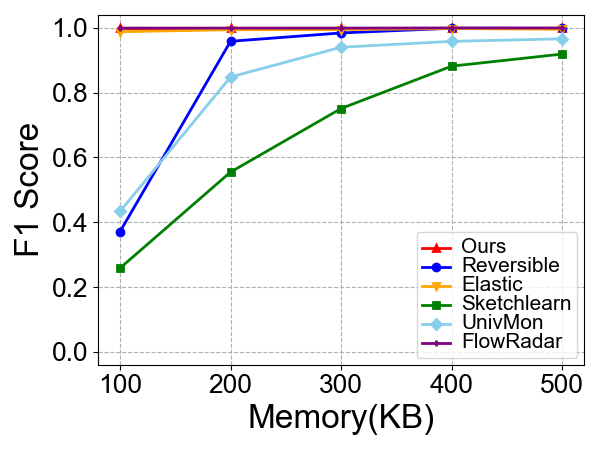}
        \label{Fig:mawi:HCF1}
    }
    \caption{Accuracy comparison of our algorithm with baselines on MAWI.}
    \label{Fig:mawi:Accuracy}
\end{figure*}
\begin{figure*}[htbp]
    \centering
    \subfigure[ARE of frequency estimation]{
        \includegraphics[width=0.23\textwidth,]{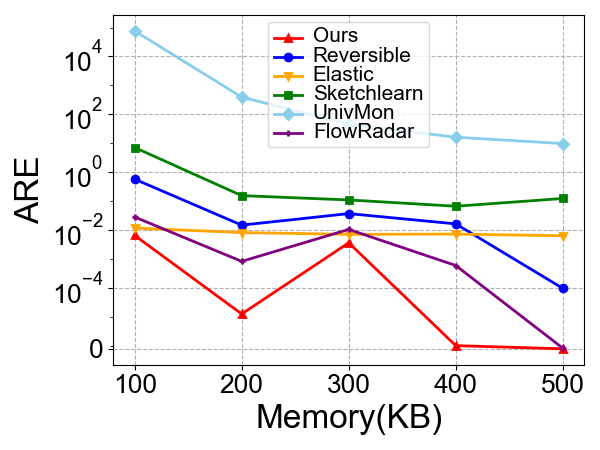}
        \label{Fig:imc:estARE}
    }
    \hspace{-0.2cm}
    \subfigure[F1 score of heavy hitter detection]{
        \includegraphics[width=0.23\textwidth,]{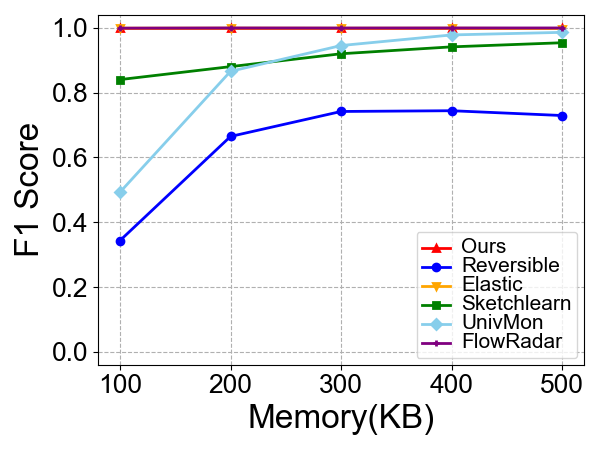}
        \label{Fig:imc:HHF1}
    }
    \hspace{-0.2cm}
    \subfigure[ARE of heavy hitter detection]{
        \includegraphics[width=0.23\textwidth,]{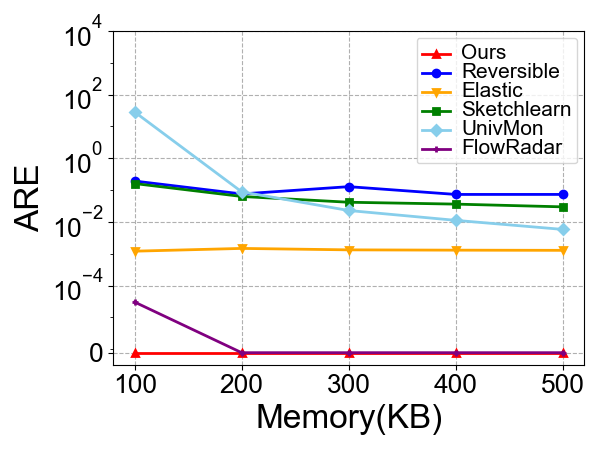}
        \label{Fig:imc:HHARE}
    }
    \hspace{-0.2cm}
    \subfigure[F1 score of heavy changer detection]{
        \includegraphics[width=0.23\textwidth,]{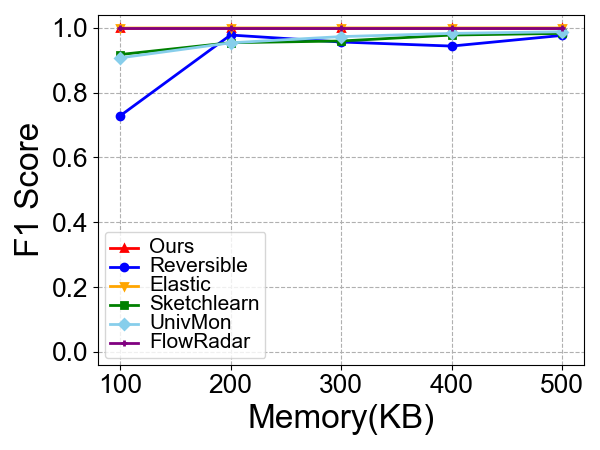}
        \label{Fig:imc:HCF1}
    }
    \caption{Accuracy comparison of our algorithm with baselines on IMC.}
    \label{Fig:imc:Accuracy}
\end{figure*}
\subsubsection{Algorithms and parameters.}
We compare our algorithm with several algorithms, including FlowRadar\cite{flowradar}, Reversible Sketch\cite{revsketch},  Sketchlearn\cite{Sketchlearn}, Elastic Sketch\cite{elastic}, and UnivMon\cite{univmon}. 
Reversible Sketch and Sketchlearn are implicit approaches that can detect heavy hitters and heavy changers, while Elastic Sketch and UnivMon are explicit approaches. 
We select these algorithms as baselines since they can perform well on all the tasks we focus on.
%
\begin{itemize}
    \item\textbf{FlowRadar:}
    Although FlowRadar records all the item keys and frequencies, we can embed it into the two-stage framework.
    We allocate enough memory for the FlowRadar to keep 1800 distinct items and use the remaining memory for the cold filter.
    \item \textbf{Reversible Sketch: }
    For Reversible Sketch, we use 4 arrays as recommended in \cite{revsketch}.
    We adjust the width of arrays to adapt to different memory.
    We allocate the spare memory to the second phase, which is a CM sketch using 3 hash functions.
    \item \textbf{Elastic Sketch: }
    We use the hardware version of Elastic Sketch.
    We allocate 2048 buckets for each array.
    \item \textbf{Sketchlearn: }
    We use one bucket array in each level, which is the best parameter in our experiment result.
    \item \textbf{UnivMon: }
    We use 14 levels for UnivMon, where each level contains a Count sketch\cite{countsketch} with 5 arrays.
    For each level, we use a heap of 900 buckets to record the heavy hitters.
    \item \textbf{Ours: }
    We divide the 32-bit key into 4 8-bit partial keys and organize them in the tree structure as illustrated in Figure \ref{Fig:RBF}.
    We reserve enough memory for the Hidden Sketch with a capacity of 1800 items, which is about 20KB of memory.
    For the cold filter, we select a CU sketch with 8-bit counters.
\end{itemize}
\subsection{Results and Analysis}
Figures \ref{Fig:caida:Accuracy}, \ref{Fig:mawi:Accuracy}, and \ref{Fig:imc:Accuracy} illustrate the accuracy of the algorithms on the three datasets.
We test the algorithms under memory budgets ranging from 100KB to 500KB, demonstrating that our method consistently achieves the highest accuracy on all tracking tasks and datasets.
\subsubsection{Frequency Estimation}
Figures \ref{Fig:caida:estARE}, \ref{Fig:mawi:estARE}, and \ref{Fig:mawi:estARE} illustrate the ARE of frequency estimation of different algorithms.
We find that Hidden Sketch consistently outperforms all baselines.
On the CAIDA dataset, Hidden Sketch achieves an ARE reduction of at least 68.1\% compared to the closest competitor.
On the MAWI dataset, the ARE of Hidden Sketch is at least 41.1\% less than baseline algorithms.
As for the IMC dataset, Hidden Sketch also achieves the lowest ARE on each memory overhead. When the memory is 500KB, the ARE even reduces to 0.

The superior ARE performance of Hidden Sketch in frequency estimation can be attributed to its efficient memory allocation strategy.
The majority of the ARE comes from estimation errors of infrequent items.
Hidden Sketch dedicates a small portion of memory to accurately record frequent items, while the remaining memory is used for frequency estimation with smaller-sized counters.
This design significantly increases the number of available counters, thereby improving estimation accuracy.
\subsubsection{Heavy Hitter Detection}
Figures \ref{Fig:caida:HHF1}, \ref{Fig:mawi:HHF1}, and \ref{Fig:imc:HHF1} present the F1 score of different algorithms on the heavy hitter detection task.
On the three datasets, the F1 score of Hidden Sketch is always near to 1, even when the memory is limited to 100KB.
Implicit approaches such as Reversible Sketch and Sketchlearn employ unreliable recovering processes, while explicit approaches such as Elastic Sketch require enough buckets to reserve heavy hitters.
Therefore, they could not achieve comparable F1 scores on heavy hitter detection tasks when the memory is limited.

Figures \ref{Fig:caida:HHARE}, \ref{Fig:mawi:HHARE}, and \ref{Fig:imc:HHARE} represent the ARE of different algorithms for heavy hitters' frequency estimation.
The ARE of Hidden Sketch is always 1 order lower than the best baseline algorithms.
Especially on the IMC dataset, Hidden Sketch provides zero-error estimation for heavy hitters.
Different from baseline algorithms, the decoding process of Hidden Sketch recovers the exact frequency of items.
The error of frequent items only comes from its online estimation in the cold filter before it is inserted into the Hidden Sketch.

\subsubsection{Heavy Changer Detection}
Figures \ref{Fig:caida:HCF1}, \ref{Fig:mawi:HCF1}, and \ref{Fig:imc:HCF1} show the F1 score of different algorithms on the heavy changer detection task.
Hidden Sketch also achieves F1 scores near to 1 in all the datasets and outperforms other algorithms.
Hidden Sketch can report frequent item sets as the candidate heavy changers, and it can estimate item frequencies accurately.
Therefore, Hidden Sketch can report heavy changers accurately by monitoring the frequency changes of frequent items.
Although Reversible Sketch and Sketchlearn can detect heavy changers through the difference in the sketches of two windows, their recovery processes are unreliable since items with similar frequencies would confuse each other.

\section{Conclusion}
\label{sec:Conclusion}
In this paper, we propose Hidden Sketch, a novel invertible data structure designed to efficiently record both the keys and frequencies of significant items in high-speed data streams with minimal memory overhead.
By integrating a CM Sketch for frequency estimation and a Reversible Bloom Filter for implicit key encoding, Hidden Sketch achieves superior memory efficiency and accuracy. 
%
%
To address tracking frequent item tasks, we introduce a two-stage framework that effectively minimizes memory usage by filtering out insignificant items.
We conduct extensive experiments to evaluate the memory efficiency and accuracy of Hidden Sketch. The results demonstrate that it consistently outperforms state-of-the-art sketches in item recording tasks, making it a robust and efficient solution for high-speed data stream processing.

\section{Acknowledgments}
This work was supported by Beijing Natural Science Foundation (Grant No. QY23043). The authors gratefully acknowledge the foundation's financial assistance, which enabled access to critical research resources and facilitated the completion of this project.

\bibliographystyle{unsrt}
\balance
\bibliography{reference}
\clearpage	
\appendix
\section{Mathematical Analysis} 
\label{sec:Math}
We prove the memory efficiency of Hidden Sketch theoretically in this section.
The analysis of the memory overhead can be divided into two parts.
For the Reversible Bloom Filter, we analyze the memory overhead to achieve an accurate candidate key set recovering.
For the CM Sketch, we will show that with a linear number of buckets relative to the number of keys, the established equation system have a unique solution, which means that we can determine the precise frequency of the items in the Hidden Sketch.

Since our Reversible Bloom Filter is a Bloom Filter with delicately designed hash functions, we need to review the memory overhead of Bloom Filter\cite{BFTheorem}.
After that, we will show that such hash function design will not reduce the performance when choosing appropriate parameters.
\begin{lemma}
\label{lemma:BloomComplexity}
    To achieve a false positive rate $\epsilon$ when storing $n$ keys in a Bloom Filter, the minimize required bit size $m$ of the Bloom Filter is $m=-\frac{n\ln (\epsilon)}{\ln(2)^2}\approx -1.44n\log_2(\epsilon)$.
\end{lemma}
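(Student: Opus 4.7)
The plan is to reproduce the standard analysis of Bloom Filter false positive rates and then optimize over the number of hash functions. I would parameterize a Bloom Filter by its bit size $m$, the number of stored keys $n$, and the number of independent hash functions $k$, and compute the false positive probability as a function of these three quantities. The minimum bit size for fixed $\epsilon$ then emerges by optimizing over $k$ and inverting the resulting expression.

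First I would compute the probability that a fixed bit in the filter remains $0$ after all $n$ insertions. Assuming the $k$ hash functions are independent and uniform over the $m$ positions, each of the $kn$ hash evaluations leaves the target bit unset with probability $1 - 1/m$. Thus the probability the bit is $0$ is $(1 - 1/m)^{kn}$, which for large $m$ is well approximated by $e^{-kn/m}$. A queried non-member triggers a false positive exactly when all $k$ of its hashed positions are $1$, so the false positive rate is approximately
\begin{equation*}
    f(k, m, n) = \bigl(1 - e^{-kn/m}\bigr)^{k}.
\end{equation*}

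Next I would minimize $f$ over $k$ for fixed $m, n$. Setting $p = e^{-kn/m}$ and taking $\ln f = k \ln(1-p)$, standard calculus (differentiating with respect to $k$ and setting the derivative to zero) yields the optimal choice $k^{\ast} = (m/n) \ln 2$, at which $p = 1/2$ and the minimum false positive rate is $\epsilon = (1/2)^{k^{\ast}} = 2^{-(m/n)\ln 2}$. Taking logarithms gives $\ln \epsilon = -(m/n)(\ln 2)^{2}$, and solving for $m$ produces the claimed bound
\begin{equation*}
    m = -\frac{n \ln \epsilon}{(\ln 2)^{2}} \approx -1.44\, n \log_{2} \epsilon.
\end{equation*}

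The only delicate points are the two approximations: replacing $(1 - 1/m)^{kn}$ by $e^{-kn/m}$, which is tight as $m \to \infty$, and allowing $k$ to range over the reals when in practice it must be an integer. Both are standard and are already acknowledged by the $\approx$ in the statement, so I would mention them briefly rather than dwell on them. The main conceptual step is the calculus optimization that pins $p = 1/2$ at the minimum; this is where the factor $1/(\ln 2)^{2}$ comes from and is the one piece of the argument worth writing out in full.
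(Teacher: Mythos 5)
Your proposal is correct and follows essentially the same route as the paper's proof: approximate the probability a bit stays unset by $e^{-kn/m}$, write the false positive rate as $(1-e^{-kn/m})^k$, optimize over $k$ to get $k=(m/n)\ln 2$ and $\epsilon=(1/2)^{(m/n)\ln 2}$, then solve for $m$. No gaps beyond the standard approximations you already flag.
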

\begin{proof}
    Let $m$ be the number of bits of the Bloom Filter, and $k$ be the number of independent hash functions employed by Bloom Filter.\\
    Then the probability that a bit is set to 1 by the inserted $n$ keys is $1-(1-\frac{1}{m})^{kn}\approx 1-e^{-\frac{kn}{m}}$.\\
    A false positive occurs if its $k$ mapping bits are all set to 1.
    Therefore, the false positive rate is $(1-e^{-\frac{kn}{m}})^k$.\\
    For given $m$ and $n$, the false positive rate is minimized when $k=\frac{m}{n}\ln 2$, and the corresponding false positive rate is $(\frac{1}{2})^{\frac{m}{n}\ln 2}$.\\
    If we want to achieve a false positive rate $\epsilon$, with the optimal value of $k$, we have $\epsilon = (\frac{1}{2})^{\frac{m}{n}\ln 2}$.\\
    Hence, the least memory is $m=-\frac{n\ln (\epsilon)}{\ln(2)^2}\approx -1.44n\log_2(\epsilon)$ bits, with the corresponding number of hash functions $k=-\frac{\ln(\epsilon)}{\ln (2)}$.
\end{proof}
Lemma \ref{lemma:BloomComplexity} indicates that when keeping less than $n$ distinct item keys, we can add $1.44n$ bits memory and one more hash function to halve the false positive rate.
Since the recovering process of Reversible Bloom Filter is equivalent to the traversal method in the term of result, we need to make the false positive rate be $O(\frac{1}{|\mathcal{K}|})$, where $|\mathcal{K}|$ denotes the cardinality of the key space $\mathcal{K}$.
We therefore prove that the memory cost of Reversible Bloom Filter is bounded by $1.44n\log_2(|\mathcal{K}|)$.
\begin{theorem}
\label{theorem:BitmapBlock}
    Suppose the length of a key is $l$ bits, and we divide the entire key into $k$ segments, each containing $\frac{l}{k}$ bits (here we assume that $\frac{l}{k}$ is an integer).
    We assume that each key segments have independent distribution.
    By using $\frac{n}{W\ln 2}$ blocks, the number of reported keys from the bitmap block array is less than $2^{l-k}$, where $W=2^{\frac{l}{k}}$, and $n$ is the number of distinct keys.
\end{theorem}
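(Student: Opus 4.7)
The plan is to analyze, under the assumed independence of key segments, the expected density of 1-bits in each bitmap and then bound the total number of candidate keys reported across the entire bitmap block array via a direct linearity-of-expectation argument.

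First I would quantify the load on a single block. Since the block hash $h$ distributes $n$ keys uniformly across $B = n/(W\ln 2)$ blocks, a given block receives $n/B = W\ln 2$ keys in expectation. Within that block, each of the $k$ leaf segments contributes one bit-setting per inserted key into its own bitmap of width $W$. Thus a specific bit of a given bitmap remains 0 with probability $(1 - 1/W)^{W\ln 2}$, which tends to $e^{-\ln 2} = 1/2$ as $W$ grows. Consequently each bit of each bitmap is set to 1 with probability $1/2$, which matches the optimal load for a Bloom Filter as in Lemma~\ref{lemma:BloomComplexity}.

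Next I would count the expected total number of candidate keys reported by the array. Per Algorithm~\ref{pseudo:KeySetRecovering}, a key $x \in \mathcal{K}$ can only be reported from the single block indexed by $h(x)$, and it is reported there iff for every segment index $i$ the bit $\mathrm{seg}_i(x)$ is set in the $i$-th bitmap of that block. Under the theorem's assumption that the $k$ segments of inserted keys have independent distributions, the events ``bit $\mathrm{seg}_i(x)$ is set in bitmap $i$'' are independent across $i$, so any fixed $x$ survives with probability $(1/2)^k$. By linearity of expectation, the expected total number of keys reported from the bitmap block array is $|\mathcal{K}| \cdot (1/2)^k = 2^l \cdot 2^{-k} = 2^{l-k}$, which is the claimed bound.

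The main obstacle is reconciling ``less than'' with the fact that the expectation equals $2^{l-k}$ exactly. I would close this gap in one of two ways: either by observing that $(1 - 1/W)^{W\ln 2}$ strictly exceeds $e^{-\ln 2} = 1/2$ for finite $W$, so each bit is actually set with probability strictly below $1/2$ and the per-key survival probability strictly undershoots $(1/2)^k$; or, if a high-probability rather than in-expectation statement is desired, by applying a Chernoff bound to the number of 1-bits per bitmap and a union bound over the $k$ bitmaps and $B$ blocks. The independence of the $k$ segments is doing all of the work in the product step, and without it one would at best obtain Bonferroni-type bounds that lose the clean $2^{l-k}$ form.
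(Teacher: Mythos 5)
Your overall strategy---per-block load analysis, then linearity of expectation over the $2^l$ candidate keys, using cross-segment independence to factorize the survival probability into a product of $k$ terms, and finally observing that a key is only tested against the one block it hashes to---is sound and arrives at the same $2^{l-k}$ count by essentially the same route as the paper. But there is one genuine gap: your claim that each bit of a bitmap remains $0$ with probability $(1-1/W)^{W\ln 2}$ silently assumes that each inserted key's segment value is \emph{uniform} over the $W$ possible values, so that every bit position is equally likely to be hit. The theorem only assumes the $k$ segments are \emph{independent} of one another; it says nothing about the marginal distribution of a single segment, and for real keys (e.g., IP-address octets) that marginal is heavily skewed, so a popular bit position is set with probability far above $1/2$ while most positions are set with probability near $0$. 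Your per-key survival probability $(1/2)^k$ therefore does not hold key-by-key, and the product step is not justified as written.

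The paper closes exactly this hole with a Jensen step: writing the segment distribution as $\{p_1,\dots,p_W\}$, the expected number of distinct set bits in one bitmap of block $i$ is $\sum_{r=1}^{W}\bigl(1-(1-p_r)^{n_i}\bigr)$, and since $x\mapsto 1-(1-x)^{n_i}$ is concave and $\sum_r p_r=1$, this sum is maximized at the uniform distribution, giving the bound $W\bigl(1-(1-1/W)^{n_i}\bigr)\approx W/2$. You can graft the same fix onto your argument with no structural change: summing your per-key survival probabilities over all candidates $(r_1,\dots,r_k)$ factorizes, by cross-segment independence, into $\prod_{i=1}^{k}\sum_{r=1}^{W}\bigl(1-(1-p^{(i)}_r)^{n_i}\bigr)$, and each factor is then bounded by $W/2$ via the same concavity argument. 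With that patch your proof becomes essentially the paper's, reorganized as a single linearity-of-expectation computation over the key space rather than a count of set bits per bitmap followed by a Cartesian product; your side remark that $(1-1/W)^{W\ln2}$ strictly exceeds $1/2$ for finite $W$ is a small bonus the paper does not bother with.
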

\begin{proof}
    Suppose we use $B$ bitmap blocks in the Reversible Bloom Filter.\\
    For the $i$th block, suppose it is mapped by $n_i$ inserting keys.\\
    As the distribution of key segments is not even, we suppose that the probability distribution of a key segment $seg_j$ is $\{p_1,p_2,...,p_{W}\}$, where $p_r$ denotes the probability of the key segment be $r$.\\
    Then the probability of the $r$th bits was set to one in the corresponding bitmap of the block is $1-(1-p_r)^{n_i}$.\\
    Due to the Lyapunov Central Limit Theorem, the number of distinct $seg_j$ is approximately $\sum\limits_{r=1}^{W}(1-(1-p_r)^{n_i})$, with a negligible error probability.\\
    Note that $1-(1-x)^{n_i}$ is a concave function and $\sum\limits_{r=1}^{W}p_r=1$, using Jensen's inequality, we have 
    \begin{equation*}
        \sum_{r=1}^{W}(1-(1-p_r)^{n_i})\leq W(1-(1-2^{-\frac{l}{k}})^{n_i}).\\
    \end{equation*}
    Thus, the number of keys reported by a bitmap block is less than $(W(1-(1-2^{-\frac{l}{k}})^{n_i}))^k\approx ((1-e^{-\frac{n_i}{W}})W)^k$.\\
    By testing whether the key is mapped to its block, and take all the $B$ blocks into consideration, the remaining number of keys is less than $\frac{W^k}{B}\sum\limits_{i=1}^{B}(1-e^{-\frac{n_i}{W}})^k$.\\
    If the hash function is even enough and $n$ is large enough, we have $n_i\approx \frac{n}{B}$, and the total number of keys is less than $W^k(1-e^{-\frac{n}{WB}})^k$.\\
    When $\frac{n}{WB}=\ln 2$, we have the number of keys less than $W^k(1-e^{-\frac{n}{WB}})^k=2^{l-k}$.
\end{proof}
Theorem \ref{theorem:BitmapBlock} has proved that the bitmap block has a comparable memory efficiency to Bloom Filter.
The remaining part of the Reversible Bloom Filter are a set of Bloom Filters depend on different key segments.
We can prove that such a method have negligible accuracy reduction.
\begin{theorem}
    \label{theorem:BloomFilters}
    When using $m$ bits memory, the lowest false positive rate that the Bloom Filters in our method can achieve is approximately $2^{-\frac{m\ln 2}{n}}$.
\end{theorem}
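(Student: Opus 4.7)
The plan is to reduce the false positive analysis of the collection of internal-node Bloom Filters to a per-filter application of Lemma \ref{lemma:BloomComplexity}, and then exploit the multiplicative structure of the recovery process to show that the product of the individual false positive rates collapses back to the single-filter bound.

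First I would fix notation: let $v_1,\ldots,v_I$ denote the internal nodes of the hierarchical tree, and let $m_i$ be the number of bits allocated to the Bloom Filter at $v_i$, so that $\sum_{i=1}^{I} m_i = m$. Each of these Bloom Filters stores at most $n$ distinct segments (the projections of the $n$ inserted keys onto the concatenation represented by $v_i$), so by Lemma \ref{lemma:BloomComplexity} its individual false positive rate, when the number of hash functions is tuned optimally to $k_i = (m_i/n)\ln 2$, is at most $2^{-m_i \ln 2 / n}$.

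Next I would analyze when a non-inserted key is reported by the recovery procedure. By the correctness discussion after Algorithm \ref{pseudo:KeySetRecovering}, the recovery is equivalent to traversing the key space and accepting exactly those keys whose segments pass every internal-node Bloom Filter. Treating the hash functions at distinct internal nodes as independent (they act on disjoint segment inputs and can be seeded independently), the probability that a fixed non-inserted key survives all $I$ filters factors as a product, giving an overall false positive rate bounded by
\begin{equation*}
    \prod_{i=1}^{I} 2^{-m_i \ln 2 / n} \;=\; 2^{-(\ln 2 / n)\sum_{i} m_i} \;=\; 2^{-m \ln 2 / n}.
\end{equation*}
Since this bound is independent of how the budget $m$ is split across the $m_i$'s, it is simultaneously the achievable rate for any valid allocation, matching exactly the optimal rate of a flat Bloom Filter of size $m$ in Lemma \ref{lemma:BloomComplexity}.

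The main obstacle I anticipate is justifying the product form rigorously. Strictly speaking, the events ``segment $s_i$ at node $v_i$ is a false positive'' are only independent if the hash families at different nodes are seeded independently, and if one ignores correlations introduced by segments that happen to coincide with projections of genuinely inserted keys (which are true positives at that node, not false positives). I would handle the latter by noting that such a coincidence can only decrease the effective false positive rate at that node, so the product upper bound still holds; and I would handle the former by explicitly assuming, as is standard in the Bloom Filter literature, that the hash functions across internal nodes are drawn independently. With these observations, the $2^{-m \ln 2 / n}$ bound follows.
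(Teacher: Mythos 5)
Your overall strategy is the same as the paper's: apply Lemma \ref{lemma:BloomComplexity} to each internal-node Bloom Filter separately (each stores at most $n$ distinct segments, so its false positive rate is at most $2^{-m_i\ln 2/n}$) and multiply the per-node survival probabilities across the hierarchy, so that $\prod_i 2^{-m_i\ln 2/n}=2^{-m\ln 2/n}$. For a non-inserted key none of whose segments coincides with a segment of any inserted key, and with independently seeded hash functions at distinct nodes, this product argument is sound and matches the paper's conclusion.

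However, your handling of the coincidence case is backwards, and that is precisely where the real difficulty of this theorem sits. If a candidate key's segment at node $v_i$ equals the projection of some genuinely inserted key, that segment \emph{is present} in the filter at $v_i$, so the candidate passes that node with probability $1$ rather than $2^{-m_i\ln 2/n}$; its overall survival probability is the product over only the non-coincident nodes and is therefore \emph{larger} than $2^{-m\ln 2/n}$. The claim that such a coincidence ``can only decrease the effective false positive rate at that node, so the product upper bound still holds'' is false for these keys. Quantitatively, ``chimera'' candidates assembled from segments of distinct inserted keys (e.g., $a_1\oplus b_2$ when $a_1\oplus a_2$ and $b_1\oplus b_2$ were both inserted) pass every child filter deterministically and can be rejected only by the parent filter; there are on the order of $\prod_i n_i$ of them, contributing roughly $\prod_i n_i\cdot 2^{-m'\ln 2/n}$ expected false positives, which your product bound does not cover. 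The paper's proof carries these terms explicitly --- the expected number of survivors at node $i$ is $(t_i-n_i)\,2^{-m_i\ln 2/n}+n_i$, the $+n_i$ accounting for true-segment coincidences --- and shows the overall rate collapses to $2^{-m\ln 2/n}$ only under the stated condition $(\tfrac{t_i}{n_i}-1)\,2^{-m_i\ln 2/n}\gg 1$, i.e., when child-level false positives dominate the coincident segments. To repair your argument you must either impose and verify an analogous condition or separately bound the contribution of partially coincident candidates; as written, the key step is unjustified.
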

\begin{proof}
    As we have proved in Lemma \ref{lemma:BloomComplexity}, the false positive rate of a Bloom Filter using $m$ bits can achieve a false positive rate of $2^{-\frac{m\ln 2}{n}}$ when storing $n$ items.
    Our Reversible Bloom Filter employs several Bloom Filters depend on key segments.
    To calculate the false positive rate, we consider the Bloom Filters as splitting memory from a entire Bloom Filter based on the entire key to build several child Bloom Filters based on key segments.
    If there are more than two layers of internal nodes, we can do more the splitting operation iteratively, and thus we only discuss one splitting operation here.\\
    Suppose that we allocate $m_1,m_2,...,m_\alpha$ bits for the $\alpha$ child Bloom Filters.
    Let $n_1,n_2,...,n_\alpha$ be the number of distinct value of the key segments that have been inserted to the Bloom Filters.
    Obviously, $n_i\leq n$, the false positive rate of these Bloom Filters can be lower than $2^{-\frac{m_1\ln 2}{n}},2^{-\frac{m_2\ln 2}{n}},...,2^{-\frac{m_\alpha\ln 2}{n}}$.
    Suppose the candidate sets of key segments have $t_1,t_2,...,t_\alpha$ respectively, then we will have the false positive rate less than
    \begin{equation*}
        \frac{(\prod\limits_{i=1}^{\alpha}((t_i-n_i)\times 2^{-\frac{m_i\ln 2}{n}} + n_i) - n)\times 2^{-\frac{m'\ln 2}{n}}}{\prod\limits_{i=1}^{\alpha}t_i-n}
    \end{equation*}
    , where $m'$ is the remaining memory of the parent Bloom Filter.\\
    Slightly modify the expression, we have
    \begin{equation*}
        \frac{(\prod\limits_{i=1}^{\alpha}((\frac{t_i}{n_i}-1)\times 2^{-\frac{m_i\ln 2}{n}} + 1) - \frac{n}{\prod\limits_{i=1}^{\alpha}n_i})\times 2^{-\frac{m'\ln 2}{n}}}{\prod\limits_{i=1}^{\alpha}\frac{t_i}{n_i}-\frac{n}{\prod\limits_{i=1}^{\alpha}n_i}}.
    \end{equation*}
    Note that the candidate set of each key segments is the Cartesian product of the result set of its child segments. Therefore, $t_i$ is a product of several numbers with the same order of $n$. Thus, $\frac{t_i}{n_i}$ is large when $n$ is large. Moreover, $\frac{n}{\prod\limits_{i=1}^{\alpha}n_i}$ is also a negligible term. If we let $(\frac{t_i}{n_i} - 1)\times 2^{-\frac{m_i\ln 2}{n}}$ be much more than 1, we can see that the false positive rate is approximately $2^{-\frac{m\ln 2}{n}}$.
\end{proof}

Combining Theorem \ref{theorem:BitmapBlock} and Theorem \ref{theorem:BloomFilters}, we can deduce the memory overhead of the Reversible Bloom Filter.
\begin{theorem}
    To store and recover $n$ distinct keys with $O(1)$ false positives, the least memory overhead of Reversible Bloom Filter is bounded by $\frac{nl}{\ln 2}\approx 1.44nl$ bits, where $l$ is the number of bits in a key.
\end{theorem}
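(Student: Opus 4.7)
The plan is to combine Theorem \ref{theorem:BitmapBlock} and Theorem \ref{theorem:BloomFilters} by first converting the ``$O(1)$ false positives'' requirement into a per-key false positive rate target, and then summing the memory costs of the bitmap block array and the Bloom Filter hierarchy.

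First I would translate the top-level guarantee into an internal false positive rate. Since Algorithm \ref{pseudo:KeySetRecovering} is equivalent in output to testing every key in the space $\mathcal{K}$ of size $2^l$, the expected number of spurious candidates returned equals $\epsilon \cdot 2^l$, where $\epsilon$ is the RBF's per-key false positive rate. Demanding $O(1)$ false positives therefore forces $\epsilon = \Theta(2^{-l})$.

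Next I would apply Theorem \ref{theorem:BloomFilters}, which gives a false positive rate $\approx 2^{-m_{\mathrm{BF}} \ln 2 / n}$ when $m_{\mathrm{BF}}$ bits are allocated across the Bloom Filter cascade attached to the internal nodes. Setting this equal to $2^{-l}$ immediately yields $m_{\mathrm{BF}} = \frac{nl}{\ln 2}$. Separately, Theorem \ref{theorem:BitmapBlock} prescribes $B = \frac{n}{W \ln 2}$ bitmap blocks, each storing $k$ bitmaps of $W = 2^{l/k}$ bits, for a total leaf cost of $B \cdot k W = \frac{nk}{\ln 2}$ bits. Summing both contributions gives total memory $\frac{n(l+k)}{\ln 2}$; since the tree depth $k$ is a small constant (e.g.\ $k=4$ in the running example of Fig.~\ref{Fig:RBF}) while $l \gg k$, the dominant term is $\frac{nl}{\ln 2} \approx 1.44\,nl$ bits.

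The main obstacle I anticipate is verifying that the two bounds compose correctly end to end. Theorem \ref{theorem:BitmapBlock} only shrinks the leaf candidate set to $2^{l-k}$, after which the Bloom Filter cascade must still drive the expected spurious count down to $O(1)$. I would therefore need to show that the per-splitting bound of Theorem \ref{theorem:BloomFilters} remains tight when chained through all internal levels and when the Cartesian products among child candidate sets are accounted for. The cleanest path is to argue that the extra polynomial slack introduced at each level is absorbed by leaving a constant-factor margin in the per-level bit allocation $m_i$, so that the aggregate bit budget is unchanged up to lower-order terms and the leading coefficient $\frac{1}{\ln 2}$ is preserved.
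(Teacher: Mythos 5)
Your overall strategy is the same as the paper's---combine Theorem \ref{theorem:BitmapBlock} and Theorem \ref{theorem:BloomFilters} and sum the memory of the bitmap block array and the internal-node Bloom Filters---but your false-positive budget is allocated differently, and as a result you do not actually reach the stated bound. You ask the Bloom Filter cascade alone to achieve a per-key rate of $2^{-l}$ over the full key space, which by Theorem \ref{theorem:BloomFilters} costs $\frac{nl}{\ln 2}$ bits, and then you add the leaf bitmap cost $\frac{nk}{\ln 2}$ on top, arriving at $\frac{n(l+k)}{\ln 2}$. That exceeds the claimed $\frac{nl}{\ln 2}$, and waving the excess away because $k \ll l$ only recovers the bound asymptotically in $l$, not as stated.

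The missing idea is that the bitmap block array is itself a filter and should be credited as such: Theorem \ref{theorem:BitmapBlock} says that its $\frac{kn}{\ln 2}$ bits already shrink the candidate set from $2^{l}$ to at most $2^{l-k}$ keys. Consequently the downstream Bloom Filters only need to achieve a false positive rate of $2^{k-l}$ (so that $2^{l-k}\cdot 2^{k-l}=O(1)$ spurious keys survive), which by Theorem \ref{theorem:BloomFilters} costs only $\frac{(l-k)n}{\ln 2}$ bits. The two contributions then sum to exactly $\frac{kn}{\ln 2}+\frac{(l-k)n}{\ln 2}=\frac{nl}{\ln 2}$. In other words, every component of the structure pays the Bloom-Filter-optimal price of $\frac{n}{\ln 2}$ bits per factor-of-two reduction in the surviving candidate count, and your version double-charges for the factor $2^{-k}$ that the bitmaps already provide. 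Your closing concern about chaining the per-split bound of Theorem \ref{theorem:BloomFilters} through multiple internal levels is legitimate, but it is an issue internal to that theorem (which the paper addresses by iterating the splitting argument), not an additional obstacle for this final composition.
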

\begin{proof}
    In Theorem \ref{theorem:BitmapBlock}, we have proved that if we divide the keys into $k$ segments, by using $\frac{kn}{\ln 2}$ bits, we can obtain a candidate key set with less than $2^{l-k}$ keys.
    We can use additional $m=\frac{(k-l)n}{\ln 2}$ bits memory to achieve a $2^{k-l}$ total false positive rate in the remaining Bloom Filters due to Theorem \ref{theorem:BloomFilters}.
    Therefore, to store and recover $n$ distinct keys with $O(1)$ false positives, the least memory overhead is bounded by $\frac{kn}{\ln 2} + \frac{(k-l)n}{\ln 2}=\frac{nl}{\ln2} \approx 1.44nl$ bits.
\end{proof}

We then discuss the number of buckets required by the CM Sketch.
To guarantee a high probability of fully decoding, the CM Sketch should employ enough buckets.
As we have discussed in \S\ref{sec:Alg:Design:DecodingProcess}, if we only use the pure bucket extraction, linear number of buckets is sufficient.
Specifically, when using $k$ hash functions, the probability of failing to decode all the items is bounded by $O(n^{-k+2})$, if $m>c_kn$, where $c_k$ is a constant associated with $k$($c_3=1.222,c_4=1.295,c_5=1.425$).
As we also employ an SVD step and an optional ILP step, the memory overhead is lower.
Although we cannot give the exact bucket number, we conducted a small experiment to show the improvement of these two steps.
We use different numbers of buckets for the CM sketch to record 100 distinct keys.
The frequencies of these keys are generated randomly.
We test the decoding process 1000 times and record the probability of successfully decoding (obtaining the exact frequencies).
As shown in Figure \ref{Fig:successrate}, we find that the two steps significantly improve the success rate.
The ILP step can even decode successfully when the number of buckets is less than the number of keys.
\begin{figure}[t]
\centering
\includegraphics[width=0.4\textwidth]{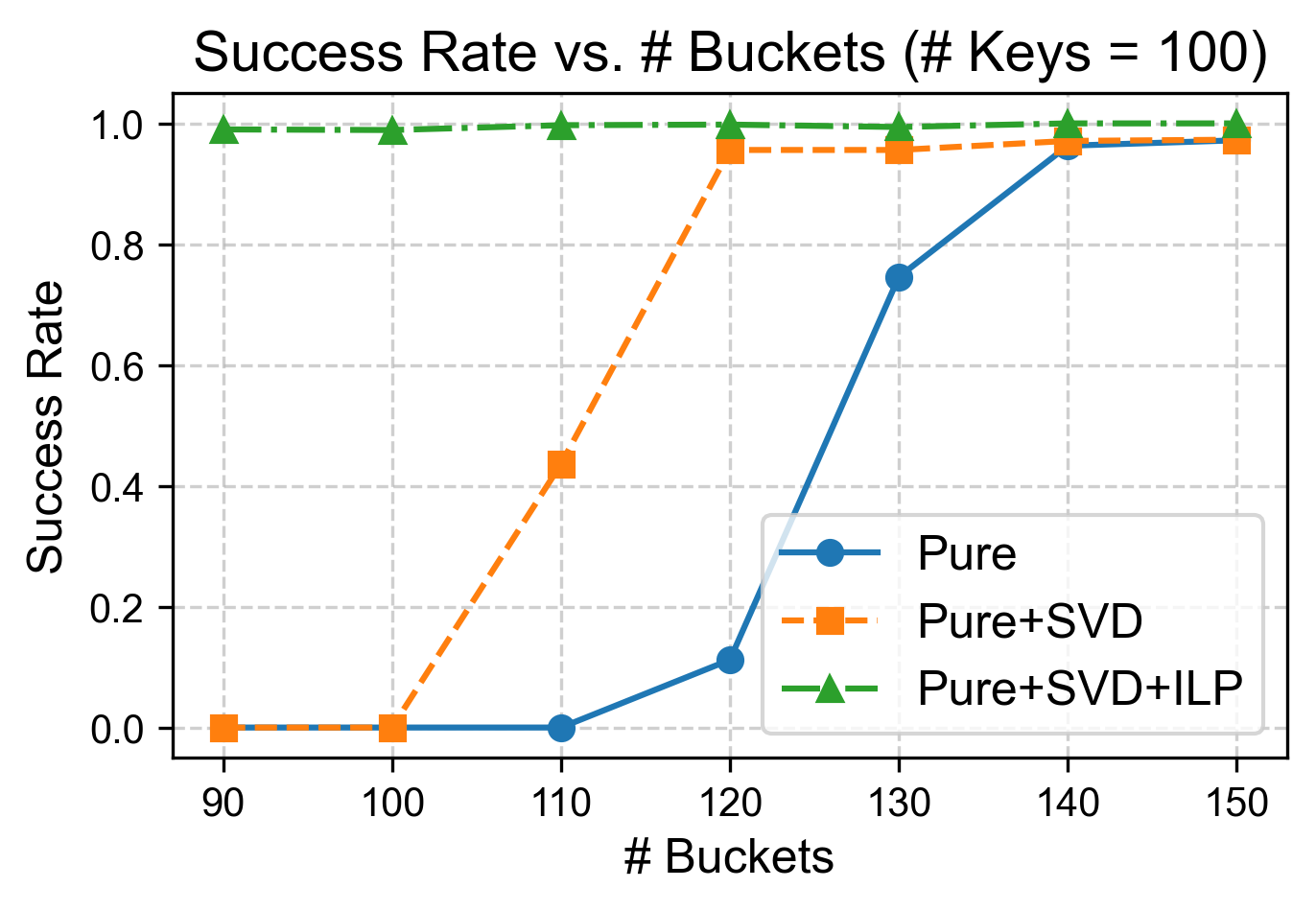} 
\caption{Comparison of success rate.}
\label{Fig:successrate}
\end{figure}

In summary, the total memory requirement for exactly decoding $n$ items of Hidden Sketch is less than $1.23nw+1.44nl$ bits in the best case, where $l$ is the bits of a key, and $w$ is the bits of a bucket.


    
\end{document}